\documentclass[12pt,a4paper,titlepage]{article}

\usepackage{indentfirst} %% Отсуп у первой строки раздела
\usepackage{lastpage} %% Количество страниц

\usepackage{amsmath} 
\usepackage{amssymb}
\usepackage{amsfonts}
\usepackage{amsthm}
\usepackage{bm} % Делает любые символы жирнными
\usepackage{icomma} % "Умная" запятая: $0,2$ --- число, $0, 2$ --- перечисление

\usepackage{extsizes} % Правильно меняет размер тескта
\usepackage{geometry} % Поля
\usepackage{setspace}
\usepackage{textcase}

\usepackage{cancel}

\usepackage[linkcolor=blue,colorlinks=true]{hyperref}

\newtheorem{theorem}{Theorem}
\newtheorem{lemma}{Lemma}

\theoremstyle{definition}
\usepackage{epsfig}
\usepackage{graphicx}

\begin{document}

	\begin{center}
		\Large
	\textbf{Constructing Fermionic Dynamics with Closed Moment Hierarchies}
	
		\large 
		\textbf{A.E. Teretenkov}\footnote{Department of Mathematical Methods for Quantum Technologies, Steklov Mathematical Institute of Russian Academy of Sciences,
			ul. Gubkina 8, Moscow 119991, Russia\\ E-mail:\href{mailto:taemsu@mail.ru}{taemsu@mail.ru}}
		\end{center}

	\begin{center}
		\begin{minipage}{0.9\linewidth}
			
			We construct a broad class of completely positive maps and Go\-rini--Kossakowski--Sudarshan-Lindblad  generators for fermionic systems induced by linear transformations of system and environment  modes. For these maps, we derive explicit Heisenberg-picture formulas for arbitrary normally ordered monomials in terms of minors of the underlying mode-transformation matrices and environment correlation tensors. We show that for even environment states the linear span of monomials up to any fixed order is invariant, which yields closed equations for low-order moments and makes their computation efficient. We also discuss the relation of this construction to second quantization of non-Hermitian one-particle contractions and extend the formalism to completely positive maps arising from post-selection.
		\end{minipage}
	\end{center}

   \section{Introduction}
   
   It is well known that, in both bosonic and fermionic settings, quantum Gaussian channels and the corresponding one-parameter semigroups possess a remarkable structural property: under their action, products of creation and annihilation operators are transformed into linear combinations of operators of the same or lower order \cite{DodMan86,Bravyi2005,Prosen2008,Prosen2010,Heinosaari2010,Prosen2010b,Ter19,Ter19R,Nosal2022}. This makes Gaussian models exceptionally tractable, since low-order correlation functions evolve in closed systems of equations and can often be computed without dealing with the full many-body density matrix. For this reason, quantum Gaussian semigroups still find new applications \cite{Bakker2020,Flynn2021, Gaidash2025, Shustin2025, Bhattacharyya2026} and new directions for analysis \cite{Fagnola2022, Agredo2022, Fagnola2024, Fagnola2025, Fang2026}. 
   
   At the same time, Gaussianity is not the only mechanism leading to closed dynamics of moments \cite{Vanheuverzwijn1978, Demoen1979, Holevo1996, Holevo98}. In recent years, several non-Gaussian models with similar invariant-space properties were identified and actively studied \cite{Teretenkov20,NosTer20,Barthel2022,Ivanov2022, Zhang2022, Teretenkov2024a,Barchielli2024, Penc2026,Teretenkov2026, Biriukov2026}. These examples indicate that the class of open quantum evolutions admitting closed equations for low-order moments is substantially wider than the Gaussian one. This raises a natural structural question: how can one systematically construct completely positive maps and semigroups whose action on fermionic monomials remains explicitly controllable, and for which moments up to a prescribed order evolve independently of higher-order ones?
   
   In the present work we address this question in the fermionic setting. We construct a broad class of completely positive maps generated by linear transformations of canonical anticommutation relation (CAR) operators on an enlarged system--environment space. In this sense, the present work can be viewed as a multimode fermionic analogue of our previous work \cite{Teretenkov2026}, where a similar idea was used for a single bosonic mode. The construction is based on a unitary mixing of system and environment fermionic modes together with a partial trace over the environment. Working in a concrete tensor-product realization of two mutually anticommuting CAR families, we derive an explicit formula for the Heisenberg-picture action of the resulting map on arbitrary normally ordered monomials $f_J^\dagger f_I$.  This formula is expressed through minors of the matrices defining the linear transformation and through the environment correlation tensors. In this way, the action of the channel is reduced to finite-dimensional algebraic data. Let us also remark that this formula can be viewed as a fermionic counterpart of formulas recently derived in the bosonic sampling context in \cite{Biriukov2026}.
   
   Our first main observation discussed in Section~\ref{sec:fermChann} is that for general environment states the obtained formula is explicit but still involves parity contributions, which in principle couple low-order moments to monomials of arbitrarily high order. However, this obstruction disappears for even environment states. In that case, the parity terms vanish and the action of the channel preserves the linear span of normally ordered monomials up to any fixed total order. As a result, all moments up to a fixed  order evolve in a closed way and can be computed without reference to higher-order correlations. This gives a wide family of fermionic channels with the same kind of computationally favorable closure property that is usually associated with Gaussian dynamics. We then analyze several important special cases. 
   
   A separate motivation of the paper is the problem of second quantization of non-Hermitian one-particle dynamics discussed in Section~\ref{sec:secondQuant}. We show that every contraction matrix can be realized as the linear Heisenberg action of a completely positive fermionic map on annihilation operators. This provides a large family of possible many-body extensions of dissipative one-particle evolutions. In general such extensions are highly non-unique, and for arbitrary environment states they do not lead to a genuine multiparticle semigroup on all monomials. Nevertheless, in the gauge-invariant case the action on products of annihilation operators is governed by exterior powers of the same one-particle contraction, which makes the construction a natural fermionic analogue of second quantization for non-Hermitian effective Hamiltonians.
   
   In Section~\ref{sec:dynamics} we also show that the same class of maps can be used to generate genuine Markovian dynamics. Namely, each completely positive map of the constructed form gives rise to a  Gorini--Kossakowski--Sudarshan--Lindblad  (GKSL) generator \cite{Gorini1976,Lindblad1976}, whose Heisenberg action on normally ordered monomials is again explicit. This immediately implies that moments up to any fixed order satisfy a closed linear system of ordinary differential equations. Hence, instead of solving the full master equation on an exponentially large state space, one can compute low-order observables by exponentiating matrices whose size grows only polynomially with the number of modes. This provides an efficient framework for studying low-order moment dynamics.
   
   In Section~\ref{sec:posSec} we extend the construction from channels to more general completely positive maps arising from post-selection on measurement outcomes in the environment. In this case the dynamics is described by inserting an effect operator into the Stinespring-type construction, and we again derive an explicit Heisenberg-picture formula for the action on fermionic monomials. For even environment states the parity obstruction again disappears. Thus, the developed formalism naturally covers not only trace-preserving evolutions, but also conditional maps and quantum-instrument-type transformations relevant for selective measurements and post-selected open system dynamics.

   \section{Fermionic channels induced by linear mode transformations}
   \label{sec:fermChann}
   
   First, we introduce the notation used throughout the paper. Let $a_1,\dots,a_m$ and $b_1,\dots,b_m$ be families of operators on $\mathbb{C}^{2^m} \otimes \mathbb{C}^{2^m}$ such that each $a_i$ and $b_j$ is a fermionic annihilation operator. 
   More precisely, we consider  the operators $a_i,a_i^\dagger$ and $b_j,b_j^\dagger$ satisfying CAR inside each family
   \begin{equation*}
   	\{a_i,a_j^\dagger\} = \delta_{ij}\,\mathbf 1,
   	\qquad
   	\{a_i,a_j\} = 0,
   	\qquad
   	\{a_i^\dagger,a_j^\dagger\} = 0,
   \end{equation*}
   \begin{equation*}
   	\{b_i,b_j^\dagger\} = \delta_{ij}\,\mathbf 1,
   	\qquad
   	\{b_i,b_j\} = 0,
   	\qquad
   	\{b_i^\dagger,b_j^\dagger\} = 0,
   \end{equation*}
   and, in addition, the two families mutually anticommute:
   \begin{equation*}
   	\{a_i,b_j\} = 0,
   	\qquad
   	\{a_i,b_j^\dagger\} = 0,
   	\qquad
   	\{a_i^\dagger,b_j\} = 0,
   	\qquad
   	\{a_i^\dagger,b_j^\dagger\} = 0,
   \end{equation*}
   for all $i,j=1,\dots,m$, where $\{X,Y\}=XY+YX$. Equivalently, we consider the total family  $a_1,\dots,a_m, b_1,\dots,b_m$ satisfying CAR.
   
   In what follows, we fix a concrete (parity trick) representation on $\mathbb{C}^{2^m} \otimes \mathbb{C}^{2^m}$, where $f_1,\dots,f_m$ are fermionic annihilation operators on $\mathbb{C}^{2^m}$, 
   \begin{equation*}
   	N_f \equiv \sum_{j=1}^m f_j^\dagger f_j,
   	\qquad
   	P = (-1)^{N_f},
   \end{equation*}
   and define
   \begin{equation}\label{eq:parityTrick}
   	a_i \equiv f_i \otimes \mathbf 1,
   	\qquad
   	b_i \equiv P \otimes f_i,
   	\qquad i=1,\dots,m.
   \end{equation}
   Most of our results are formulated in this specific representation. On the other hand, we do not need to fix a  concrete representation of $f_j$ on $\mathbb{C}^{2^m}$. If an explicit realization is needed, one can use the inverse Jordan--Wigner transformation to build up $f_j$ based on tensor products of Pauli matrices.
   
   Introduce column vectors
   \begin{equation*}
   	a = (a_1,\dots,a_m)^{\mathsf T}, 
   	\qquad 
   	b = (b_1,\dots,b_m)^{\mathsf T}.
   \end{equation*}
   Similarly we write the parity trick representation \eqref{eq:parityTrick} as
   \begin{equation}
   	a = f \otimes \mathbf 1,
   	\qquad
   	b = P \otimes f,
   \end{equation}
   where $f = (f_1,\dots,f_m)^{\mathsf T}$.
   For matrices $A,B\in\mathbb{C}^{m\times m}$, define linear combinations componentwise by
   \begin{equation*}
   	(Aa)_i = \sum_{k=1}^m A_{ik} a_k,
   	\qquad
   	(Bb)_i = \sum_{\alpha=1}^m B_{i\alpha} b_\alpha,
   \end{equation*}
   and set $(Aa+Bb)_i = (Aa)_i + (Bb)_i$.
   
   For ordered multi-index $L=(l_1<\dots<l_q)$ and $\Lambda=(\lambda_1<\dots<\lambda_r)$, we write
   \begin{equation*}
   	a_L \equiv a_{l_1}\cdots a_{l_q},
   	\qquad
   	b_\Lambda \equiv b_{\lambda_1}\cdots b_{\lambda_r}.
   \end{equation*}
   We also define
   \begin{equation*}
   	a_L^{\dagger} \equiv (a_L)^{\dagger}  =  a_{l_q}^{\dagger}\cdots a_{l_1}^{\dagger}
   \end{equation*}
   with reverse order of $ a_{l_j}$, which is consistent with hermitian conjugation.

   \begin{lemma}\label{lem:femionicMultinonmial}	
   	Let $A,B\in\mathbb{C}^{m\times m}$ and fix an ordered multi-index $I=(i_1<\dots<i_p)$.
   	Then
   	\begin{equation}
   		(A a + B b)_{I}
   		=
   		\sum_{|I|= |L|+|\Lambda|}
   		\det\!\bigl( A_{I\times L}\,|\,B_{I\times \Lambda} \bigr)\;
   		a_L\, b_\Lambda,
   		\label{eq:compact-noT}
   	\end{equation}
   	where 	
   	\begin{equation*}
   		(A a + B b)_{I} \equiv \prod_{s=1}^p (A a + B b)_{i_s},
   	\end{equation*}
   	the sum runs over all ordered subsets $	L=(l_1<\dots<l_{p-r})$ and $\Lambda=(\lambda_1<\dots<\lambda_r)$, such that $|I|= |L|+|\Lambda|$, we denote by $A_{I\times L}$ (resp.\ $B_{I\times \Lambda}$) the submatrix of $A$
   	(resp.\ $B$) formed by rows $I$ and columns $L$ (resp.\ $\Lambda$), and by
   	\[
   	(A_{I\times L}\,|\,B_{I\times \Lambda}) = 
   	\begin{pmatrix}
   		A_{i_1 l_1} & \dots & A_{i_1 l_{p-r}} &
   		B_{i_1 \lambda_1} & \dots & B_{i_1 \lambda_r}\\
   		\vdots & & \vdots & \vdots & & \vdots\\
   		A_{i_p l_1} & \dots & A_{i_p l_{p-r}} &
   		B_{i_p \lambda_1} & \dots & B_{i_p \lambda_r}
   	\end{pmatrix} \in  \mathbb{C}^{p\times p}
   	\]
   	their horizontal concatenation.
   \end{lemma}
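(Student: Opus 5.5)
The plan is to treat all $2m$ operators as one family and reduce the claim to the standard fact that a product of linear forms in anticommuting variables is expanded by determinants. Set $c = (c_1,\dots,c_{2m})^{\mathsf T} = (a_1,\dots,a_m,b_1,\dots,b_m)^{\mathsf T}$. By the mutual anticommutation relations, any two of the $c_k$ anticommute, and $c_k^2=0$ since $\{a_i,a_i\}=0$ and $\{b_i,b_i\}=0$. Also set $M = (A\,|\,B)\in\mathbb{C}^{m\times 2m}$, so that $(Aa+Bb)_i = \sum_{k=1}^{2m} M_{ik} c_k$. Only these algebraic relations among the $c_k$ will be used; the concrete parity-trick realization \eqref{eq:parityTrick} plays no role.

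First, expand the ordered product by multilinearity:
\[
(Aa+Bb)_I = \sum_{k_1,\dots,k_p=1}^{2m} M_{i_1k_1}\cdots M_{i_pk_p}\, c_{k_1}\cdots c_{k_p}.
\]
Terms in which two indices $k_s$ coincide vanish. To see this, anticommute one copy of the repeated factor next to the other, at the cost of a sign, and then use $c_k^2=0$.

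For distinct indices, let $K=(\kappa_1<\dots<\kappa_p)$ be the sorted set of $k_1,\dots,k_p$, and write $k_s=\kappa_{\sigma(s)}$ for a permutation $\sigma\in S_p$. Pairwise anticommutation gives $c_{k_1}\cdots c_{k_p} = \operatorname{sgn}(\sigma)\, c_{\kappa_1}\cdots c_{\kappa_p}$. Grouping the terms by $K$, the coefficient of $c_K$ is
\[
\sum_{\sigma\in S_p}\operatorname{sgn}(\sigma)\prod_s M_{i_s\kappa_{\sigma(s)}} = \det M_{I\times K}.
\]
This is the Leibniz formula, and it is the only step where care with signs is needed.

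Next, split each ordered $K\subset\{1,\dots,2m\}$ as $K = L\sqcup(\Lambda+m)$. Here $L\subset\{1,\dots,m\}$ collects the $a$-indices, $\Lambda$ collects the $b$-indices shifted down by $m$, and $|L|+|\Lambda|=p$. Because all $a$-indices precede all $b$-indices in the ordering of $c$, the increasing product factorizes with no extra sign: $c_K = a_L\,b_\Lambda$. For the same reason the columns of $M_{I\times K}$, taken in increasing order, are exactly the columns of $A_{I\times L}$ followed by those of $B_{I\times\Lambda}$. Hence $\det M_{I\times K} = \det(A_{I\times L}\,|\,B_{I\times\Lambda})$, which yields \eqref{eq:compact-noT}.

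No step is a real obstacle. The only points to check are the sign bookkeeping in the reordering, which is exactly what produces the determinant via Leibniz, and the fact that putting the $a$'s before the $b$'s in the combined family introduces no additional sign.
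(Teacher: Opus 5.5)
Your proof is correct and follows essentially the same route as the paper: you expand the product multilinearly, reorder each surviving term into $a_L b_\Lambda$ at the cost of the sign of the permutation, and recognize the Leibniz formula for $\det(A_{I\times L}\,|\,B_{I\times\Lambda})$. Your version is slightly more careful than the paper's, since you explicitly dispose of the terms with a repeated generator via $c_k^2=0$ and check that placing the $a$'s before the $b$'s introduces no extra sign, both of which the paper leaves implicit.
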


   \begin{proof}
   	For brevity, denote
   	\[
   	x_s := (Aa+Bb)_{i_s}
   	=
   	\sum_{k=1}^m A_{i_s k} a_k
   	+
   	\sum_{\alpha=1}^m B_{i_s \alpha} b_\alpha,
   	\qquad s=1,\dots,p.
   	\]
   	Then
   	\[
   	(Aa+Bb)_I = x_1\cdots x_p.
   	\]
   	
   	Since all operators $a_k$ and $b_\alpha$ mutually anticommute, we have
   	\[
   	x_s x_t = - x_t x_s,
   	\qquad s\neq t.
   	\]
   	Hence the product $x_1\cdots x_p$ is multilinear and totally antisymmetric
   	with respect to the choice of one summand from each factor $x_s$.
   	
   	Now expand the product
   	\[
   	x_1\cdots x_p
   	=
   	\prod_{s=1}^p
   	\left(
   	\sum_{k=1}^m A_{i_s k} a_k
   	+
   	\sum_{\alpha=1}^m B_{i_s \alpha} b_\alpha
   	\right).
   	\]
   	Each summand in this expansion is obtained by choosing, for every $s=1,\dots,p$,
   	either one term $A_{i_s k}a_k$ or one term $B_{i_s\alpha}b_\alpha$.
   	
   	Fix $r\in\{0,\dots,p\}$, and consider those terms for which exactly $r$ factors
   	are chosen from the $Bb$ part and the remaining $p-r$ factors from the $Aa$ part. We collect all terms which,
   	after reordering, give the monomial
   	\[
   	a_L b_\Lambda
   	=
   	a_{l_1}\cdots a_{l_{p-r}}\, b_{\lambda_1}\cdots b_{\lambda_r}.
   	\]
   	
   	Such terms are in one-to-one correspondence with permutations of the $p$ columns
   	of the block matrix $(A_{I\times L}\,|\,B_{I\times \Lambda})$. 
   	Indeed, choosing in the $s$-th factor one of the operators
   	\[
   	a_{l_1},\dots,a_{l_{p-r}}, b_{\lambda_1},\dots,b_{\lambda_r}
   	\]
   	amounts to choosing one column of this matrix in the $s$-th row. Since each of
   	the operators in $a_L b_\Lambda$ must appear exactly once, the admissible choices
   	are parametrized by permutations $\pi\in S_p$.
   	
   	For a fixed permutation $\pi$, the corresponding contribution equals
   	\[
   	\prod_{s=1}^p M_{s,\pi(s)}
   	\]
   	times the operator monomial obtained from $c_{\pi(1)}\cdots c_{\pi(p)}$,
   	where
   	\[
   	(c_1,\dots,c_p)
   	=
   	(a_{l_1},\dots,a_{l_{p-r}}, b_{\lambda_1},\dots,b_{\lambda_r}).
   	\]
   	Reordering this product into the canonical order
   	\[
   	a_{l_1}\cdots a_{l_{p-r}}\, b_{\lambda_1}\cdots b_{\lambda_r}
   	\]
   	produces the sign $\operatorname{sgn}(\pi)$, because all generators mutually
   	anticommute. Therefore the total coefficient of $a_L b_\Lambda$ is
   	\[
   	\sum_{\pi\in S_p}
   	\operatorname{sgn}(\pi)
   	\prod_{s=1}^p M_{s,\pi(s)},
   	\]
   	where $	M=(A_{I\times L}\,|\,B_{I\times \Lambda})$.	By the Leibniz formula for the determinant, this is exactly
   	\[
   	\det\!\bigl(A_{I\times L}\,|\,B_{I\times \Lambda}\bigr).
   	\]
   	
   	Summing over all possible values of $r$, and over all ordered sets
   	\[
   	L\subset\{1,\dots,m\},\quad |L|=p-r,
   	\qquad
   	\Lambda\subset\{1,\dots,m\},\quad |\Lambda|=r,
   	\]
   	we obtain
   	\begin{equation*}
   		(A a + B b)_{I}
   		=
   		\sum_{r=0}^p
   		\sum_{\substack{
   				L\subset\{1,\dots,m\},\,|L|=p-r\\
   				\Lambda\subset\{1,\dots,m\},\,|\Lambda|=r
   		}}
   		\det\!\bigl( A_{I\times L}\,|\,B_{I\times \Lambda} \bigr)\;
   		a_L\, b_\Lambda,
   	\end{equation*}
   	which is exactly the sum in Eq.~\eqref{eq:compact-noT}.
   \end{proof}

   Let $A,B\in\mathbb C^{m\times m}$ satisfy
   \begin{equation}
   	A A^\dagger + B B^\dagger = \mathbf 1
   	\label{eq:isometry-condition-AB}
   \end{equation}
   Introduce the matrix
   \[
   R \equiv (A\,|\,B)\in\mathbb C^{m\times 2m}.
   \]
   Then condition \eqref{eq:isometry-condition-AB} is equivalent to
   \[
   R R^\dagger=\mathbf 1_m,
   \]
   which means that the rows of $R$ form an orthonormal system in $\mathbb C^{2m}$.
   Hence $R$ defines an isometric embedding $\mathbb C^{m}\hookrightarrow \mathbb C^{2m}$. Any isometry can be extended to a unitary
   operator. Therefore, there exist matrices $C,D\in\mathbb C^{m\times m}$ such that
   \begin{equation*}
   	W=
   	\begin{pmatrix}
   		A & B\\
   		C & D
   	\end{pmatrix}
   	\in U(2m).
   	\label{eq:unitary-completion-W}
   \end{equation*}
   The unitary matrix $W$ defines a linear transformation of the fermionic modes,
   \begin{equation*}
   	c=
   	\begin{pmatrix}
   		a\\ b
   	\end{pmatrix},
   	\qquad
   	c' = W c,	
   \end{equation*}
   which preserves the canonical anticommutation relations. Therefore, by
   uniqueness of the finite-dimensional irreducible representation of the CAR
   algebra up to unitary equivalence, this transformation is unitarily implementable in the chosen representation, i.e., there exists
   a unitary operator $U: \mathbb{C}^{2^m} \otimes \mathbb{C}^{2^m} \rightarrow \mathbb{C}^{2^m} \otimes \mathbb{C}^{2^m}$ such that
   \begin{equation*}
   	U^\dagger c\, U = W c.
   \end{equation*}
   In particular,
   \begin{equation*}
   	U^\dagger a U = A a + B b.
   \end{equation*}
   This unitary realization allows us to construct a completely positive map on $\mathbb{C}^{2^m \times 2^m}$ by tracing out the second tensor factor in $ \mathbb{C}^{2^m} \otimes \mathbb{C}^{2^m}$, for which action in the Heisenberg picture  on arbitrary ordered monomials $f_K^\dagger f_L$ is known explicitly.
   
   \begin{lemma}\label{lem:Phi-general-compact-parity}
   	Assume   that $U: \mathbb{C}^{2^m} \otimes \mathbb{C}^{2^m} \rightarrow \mathbb{C}^{2^m} \otimes \mathbb{C}^{2^m}$  is a unitary operator, which acts as
   	\begin{equation}\label{eq:linearTransform}
   		U^\dagger a U = A a + B b
   	\end{equation}
   	for some $A,B\in\mathbb C^{m\times m}$ such that
   	\begin{equation*}
   		A A^\dagger + B B^\dagger = \mathbf 1.
   	\end{equation*}
   	Let $\sigma: \mathbb{C}^{2^m}  \rightarrow \mathbb{C}^{2^m}$, $\sigma = \sigma^{\dagger}$, $\sigma  \ge 0$ act on the second tensor factor, and define a completely positive map
   	\begin{equation}\label{eq:defOfCPmap}
   		\Phi(\rho)
   		\equiv
   		\operatorname{Tr}_2\!\bigl[
   		U(\rho \otimes \sigma)U^\dagger
   		\bigr].
   	\end{equation}	
   	Let
   	$
   	I=(i_1<\dots<i_p)$,
   	$
   	J=(j_1<\dots<j_q)
   	$
   	be ordered multi-indices. 
   	
   	Then
   	\begin{equation}
   		\Phi^*\!\bigl(f_J^\dagger f_I\bigr)
   		=
   		\sum_{\substack{
   				|J|=|K|+|\Xi|\\
   				|I|=|L|+|\Omega|
   		}}
   		(-1)^{|\Xi|(|K|+|L|)}
   		\det\!\bigl( A_{J\times K}\,|\,B_{J\times \Xi} \bigr)^*\,
   		\Gamma_{\Xi;\Omega}\,
   		\det\!\bigl( A_{I\times L}\,|\,B_{I\times \Omega} \bigr)
   		f_K^\dagger f_L \, P^{|\Xi|+|\Omega|},
   		\label{eq:Phi-general-compact-parity}
   	\end{equation}
   	where the sum runs over all subsets $
   	K,L,\Xi,\Omega \subseteq \{1,\dots,m\}
   	$
   	such that
   	$
   	|K|+|\Xi|=|J|,
   	$
   	$
   	|L|+|\Omega|=|I|,
   	$
   	and
   	\begin{equation}\label{eq:envMoments}
   		\Gamma_{\Xi;\Omega}
   		\equiv
   		\operatorname{Tr}\!\bigl(
   		\sigma\, f_\Xi^\dagger \, f_\Omega
   		\bigr).
   	\end{equation}
   \end{lemma}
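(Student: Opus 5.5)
The plan is to pass to the Heisenberg picture through the Stinespring-type form \eqref{eq:defOfCPmap}, conjugate by $U$ using \eqref{eq:linearTransform}, expand both factors with Lemma~\ref{lem:femionicMultinonmial}, and then evaluate the partial trace in the parity trick representation \eqref{eq:parityTrick}.

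First, from $\operatorname{Tr}(\Phi(\rho)X)=\operatorname{Tr}\bigl[U(\rho\otimes\sigma)U^\dagger(X\otimes\mathbf 1)\bigr]$ and cyclicity of the trace, I would read off the dual map
\[
\Phi^*(X)=\operatorname{Tr}_2\bigl[(\mathbf 1\otimes\sigma)\,U^\dagger(X\otimes\mathbf 1)U\bigr].
\]
Take $X=f_J^\dagger f_I$. Then $X\otimes\mathbf 1=a_J^\dagger a_I$, because $a_i=f_i\otimes\mathbf 1$. Since conjugation by $U$ is a $*$-homomorphism, \eqref{eq:linearTransform} gives
\[
U^\dagger a_J^\dagger a_I U=\bigl((Aa+Bb)_J\bigr)^\dagger (Aa+Bb)_I .
\]
Applying Lemma~\ref{lem:femionicMultinonmial} to both $J$ and $I$, and using $(a_Kb_\Xi)^\dagger=b_\Xi^\dagger a_K^\dagger$, I expect a double sum. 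Its coefficients are $\det(A_{J\times K}|B_{J\times\Xi})^*\det(A_{I\times L}|B_{I\times\Omega})$, multiplying the operators $b_\Xi^\dagger a_K^\dagger a_L b_\Omega$.

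Next I would normal-order each term as $a_K^\dagger a_L\, b_\Xi^\dagger b_\Omega$. Because the two families mutually anticommute, moving the $|\Xi|$ factors of $b_\Xi^\dagger$ past the $|K|+|L|$ factors of $a_K^\dagger a_L$ produces exactly the sign $(-1)^{|\Xi|(|K|+|L|)}$. This step should be done directly from the abstract CAR relations, before choosing a representation.

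Then I would insert the representation. Here $a_K^\dagger a_L=f_K^\dagger f_L\otimes\mathbf 1$. For the environment factor, $b_\Omega=\prod_s(P\otimes f_{\omega_s})=P^{|\Omega|}\otimes f_\Omega$, and since $P$ is Hermitian, $b_\Xi^\dagger=P^{|\Xi|}\otimes f_\Xi^\dagger$. Hence
\[
a_K^\dagger a_L\, b_\Xi^\dagger b_\Omega=f_K^\dagger f_L P^{|\Xi|+|\Omega|}\otimes f_\Xi^\dagger f_\Omega .
\]
Taking $\operatorname{Tr}_2$ against $\mathbf 1\otimes\sigma$ turns the second factor into $\Gamma_{\Xi;\Omega}$ from \eqref{eq:envMoments}, which yields \eqref{eq:Phi-general-compact-parity}.

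The main obstacle is sign bookkeeping. There are two separate places where a sign could enter, and each must be checked carefully:
\begin{itemize}
\item in the representation, where the first tensor factor of $b_\Xi^\dagger b_\Omega$ is a plain product of powers of $P$; no extra sign arises there, because each $P$ sits in a different tensor slot from the $f$'s;
\item in the reordering, where the anticommutation of $b_\Xi^\dagger$ past $a_K^\dagger a_L$ is the only source of the sign $(-1)^{|\Xi|(|K|+|L|)}$.
\end{itemize}
I would also verify that placing $P^{|\Xi|+|\Omega|}$ to the right of $f_K^\dagger f_L$ matches the order produced by the tensor product, which it does as written.
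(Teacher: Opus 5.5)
Your proposal is correct and follows the paper's own proof essentially step for step. The shared steps are the dual-map formula $\Phi^*(X)=\operatorname{Tr}_2[(\mathbf 1\otimes\sigma)U^\dagger(X\otimes\mathbf 1)U]$, Lemma~\ref{lem:femionicMultinonmial} applied to both $I$ and $J$, the reordering sign $(-1)^{|\Xi|(|K|+|L|)}$ from mutual anticommutation, and the factorization $a_K^\dagger a_L b_\Xi^\dagger b_\Omega=f_K^\dagger f_L P^{|\Xi|+|\Omega|}\otimes f_\Xi^\dagger f_\Omega$ before taking the partial trace; your explicit use of the $*$-homomorphism property of conjugation by $U$ and of the Hermiticity of $P$ (to get $b_\Xi^\dagger=P^{|\Xi|}\otimes f_\Xi^\dagger$) only spells out details the paper uses implicitly.
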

   
   \begin{proof}
   	By definition of the dual map,
   	\begin{equation}\label{eq:dualMap}
   		\Phi^*(X)
   		=
   		\operatorname{Tr}_2\!\bigl[
   		(\mathbf 1\otimes \sigma)\,
   		U^\dagger (X\otimes \mathbf 1)U
   		\bigr].
   	\end{equation}
   	For $X = f_J^\dagger f_I$ we have
   	\begin{equation*}
   		f_J^\dagger f_I \otimes \mathbf 1
   		=
   		a_J^\dagger a_I,
   	\end{equation*}
   	since $a_j = f_j \otimes \mathbf 1$. Substituting this into Eq.~\eqref{eq:dualMap}, we have
   	\begin{equation}\label{eq:dualActionOnMoments}
   		\Phi^*(f_J^\dagger f_I)
   		=
   		\operatorname{Tr}_2\!\bigl[
   		(\mathbf 1\otimes \sigma)\,
   		(U^\dagger a_J U)^\dagger\,
   		(U^\dagger a_I U)
   		\bigr].
   	\end{equation}
   	
   	Using the assumption \eqref{eq:linearTransform}, we obtain
   	\begin{equation*}
   		U^\dagger a_I U = (A a + B b)_I,
   		\qquad
   		U^\dagger a_J U = (A a + B b)_J.
   	\end{equation*}
   	Applying Lemma~\ref{lem:femionicMultinonmial} to both multi-indices $I$ and $J$, we get
   	\begin{align*}
   		U^\dagger a_I U
   		&=
   		\sum_{|I|=|L|+|\Omega|}
   		\det\!\bigl(A_{I\times L}\,|\,B_{I\times \Omega}\bigr)\,
   		a_L b_\Omega,
   		\\
   		U^\dagger a_J U
   		&=
   		\sum_{|J|=|K|+|\Xi|}
   		\det\!\bigl(A_{J\times K}\,|\,B_{J\times \Xi}\bigr)\,
   		a_K b_\Xi.
   	\end{align*}
   	Therefore,
   	\begin{equation*}
   		(U^\dagger a_J U)^\dagger
   		=
   		\sum_{|J|=|K|+|\Xi|}
   		\det\!\bigl(A_{J\times K}\,|\,B_{J\times \Xi}\bigr)^*\,
   		b_\Xi^\dagger a_K^\dagger.
   	\end{equation*}
   	
   	Substituting these expressions into Eq.~\ref{eq:dualActionOnMoments} for $\Phi^*(f_J^\dagger f_I)$ yields
   	\begin{equation*}
   		\Phi^*(f_J^\dagger f_I)
   		=
   		\sum_{\substack{|J|=|K|+|\Xi|\\|I|=|L|+|\Omega|}}
   		\det\!\bigl(A_{J\times K}\,|\,B_{J\times \Xi}\bigr)^*\,
   		\det\!\bigl(A_{I\times L}\,|\,B_{I\times \Omega}\bigr)
   		%		\nonumber\\
   		%		&\quad\times
   		\operatorname{Tr}_2\!\bigl[
   		(\mathbf 1\otimes \sigma)\,
   		b_\Xi^\dagger a_K^\dagger a_L b_\Omega
   		\bigr].
   	\end{equation*}
   	
   	Since the families $a$ and $b$ mutually anticommute, each operator
   	$b_\alpha^\dagger$ anticommutes with each operator $a_k^\dagger$ and $a_l$.
   	Hence
   	\begin{equation*}
   		b_\Xi^\dagger a_K^\dagger
   		=
   		(-1)^{|\Xi|\,|K|}\,
   		a_K^\dagger b_\Xi^\dagger,
   		\qquad
   		b_\Xi^\dagger a_L
   		=
   		(-1)^{|\Xi|\,|L|}\,
   		a_L b_\Xi^\dagger.
   	\end{equation*}
   	Combining these relations, we obtain
   	\begin{equation*}
   		b_\Xi^\dagger a_K^\dagger a_L b_\Omega
   		=
   		(-1)^{|\Xi|(|K|+|L|)}\,
   		a_K^\dagger a_L\, b_\Xi^\dagger b_\Omega.
   	\end{equation*}
   	Thus,
   	\begin{equation}\label{eq:dualActionExpand}
   		\Phi^*(f_J^\dagger f_I)
   		=
   		\sum_{\substack{|J|=|K|+|\Xi|\\|I|=|L|+|\Omega|}}
   		(-1)^{|\Xi|(|K|+|L|)}
   		\det\!\bigl(A_{J\times K}\,|\,B_{J\times \Xi}\bigr)^*\,
   		\det\!\bigl(A_{I\times L}\,|\,B_{I\times \Omega}\bigr)
   		%		\nonumber\\
   		%		&\quad\times
   		\operatorname{Tr}_2\!\bigl[
   		(\mathbf 1\otimes \sigma)\,
   		a_K^\dagger a_L\, b_\Xi^\dagger b_\Omega
   		\bigr].
   	\end{equation}
   	
   	Now, from Eq.~\eqref{eq:parityTrick} we have $	b_\Xi
   	=
   	P^{|\Xi|}\otimes f_\Xi$, $	b_\Xi^\dagger
   	=
   	P^{|\Xi|}\otimes f_\Xi^\dagger$
   	and similarly $	b_\Omega
   	=
   	P^{|\Omega|}\otimes f_\Omega$. Therefore,
   	\begin{equation*}
   		b_\Xi^\dagger b_\Omega
   		=
   		P^{|\Xi|+|\Omega|}
   		\otimes
   		f_\Xi^\dagger f_\Omega.
   	\end{equation*}
   	and similarly
   	\begin{equation*}
   		a_K^\dagger a_L
   		=
   		f_K^\dagger f_L \otimes \mathbf 1.
   	\end{equation*}
   	Combining these expressions, we obtain 
   	\begin{equation*}
   		a_K^\dagger a_L\, b_\Xi^\dagger b_\Omega
   		=
   		f_K^\dagger f_L\, P^{|\Xi|+|\Omega|}
   		\otimes
   		f_\Xi^\dagger f_\Omega.
   	\end{equation*}
   	Taking the partial trace over the second tensor factor gives
   	\begin{equation*}
   		\operatorname{Tr}_2\!\bigl[
   		(\mathbf 1\otimes \sigma)\,
   		a_K^\dagger a_L\, b_\Xi^\dagger b_\Omega
   		\bigr]
   		=
   		\operatorname{Tr}\!\bigl(\sigma f_\Xi^\dagger f_\Omega\bigr)\,
   		f_K^\dagger f_L\, P^{|\Xi|+|\Omega|}.
   	\end{equation*}
   	Substituting it into Eq.~\eqref{eq:dualActionExpand} and taking into account Eq.~\eqref{eq:envMoments} we arrive at Eq.~\eqref{eq:Phi-general-compact-parity}.
   \end{proof}
   
   Let us emphasize that in Eq.~\eqref{eq:defOfCPmap} we use just usual tensor product in the  specific representation defined by Eq.~\eqref{eq:parityTrick} rather than fermionic tensor product \cite{Szalay2021}, which is representation invariant. Note that Eq.~\eqref{eq:defOfCPmap} defines a completely positive map, because tensor multiplication by a non-negative matrix, unitary transformation and partial trace are completely positive. Actually, Eq.~\eqref{eq:defOfCPmap} is just a variant of Stinespring dilation of the map $\Phi$. In the case, $\operatorname{Tr}\sigma = 1$ Eq.~\eqref{eq:defOfCPmap} defines a completely positive and trace preserving map, or quantum channel, and in general it is a quantum channel up to a constant $\operatorname{Tr}\sigma $. Explicit formula \eqref{eq:Phi-general-compact-parity} allows us to calculate averages of arbitrary normally ordered monomials  $f_J^\dagger f_I$ at the output of such a channel based on averages of these and lower order monomials $	f_K^\dagger f_L $ and lower order terms of the form $	f_K^\dagger f_L P$ at the input of the channel. But the parity operator $P$ contains contributions from  normally ordered monomials of arbitrary length
   \begin{equation*}
   	P
   	=
   	\sum_{J}  (-2)^{|J|}
   	f_J^\dagger f_J,
   \end{equation*}
   where the sum runs over all subsets   $
   J \subseteq \{1,\dots,m\}
   $ (including empty set). It prevents one from expressing output moments solely in terms of input moments of the same or lower order. This problem is very close to the fact that pure fermionic states should satisfy parity superselection rules to have meaningful reduced density matrices \cite{Amosov2017} independent of specific representation \eqref{eq:parityTrick}. Nevertheless, the fermionic states beyond parity superselection rules can arise in the case, when fermions are not fundamental, but just represent a multi-qubit systems \cite{Lyu2025}.
   
   Nevertheless, if one  assumes that $\sigma$ is even, i.e. $	[P,\sigma]=0$, then 
   \begin{equation*}
   	\Gamma_{\Xi;\Omega}
   	=
   	\operatorname{Tr}\!\bigl(\sigma f_\Xi^\dagger f_\Omega\bigr)
   	=0
   	\qquad
   	\text{whenever } |\Xi|+|\Omega| \text{ is odd}.
   \end{equation*}
   So the terms of the form $f_K^\dagger f_L\, P$ vanish from Eq.~\eqref{eq:Phi-general-compact-parity} and we obtain the following theorem.
   
   \begin{theorem}\label{th:Phi-general-compact-even-sigma}
   	In the setting of Lemma~\ref{lem:Phi-general-compact-parity}, assume in addition that $[P,\sigma]=0$, then
   	\begin{equation}
   		\Phi^*\!\bigl(f_J^\dagger f_I\bigr)
   		=
   		\sum_{\substack{
   				|J|=|K|+|\Xi|\\
   				|I|=|L|+|\Omega|\\
   				|\Xi|+|\Omega|  =0 \, ( \!\!\!\!\!\! \mod 2)
   		}}
   		(-1)^{|\Xi|(|K|+|L|)}
   		\det\!\bigl( A_{J\times K}\,|\,B_{J\times \Xi} \bigr)^*\,
   		\Gamma_{\Xi;\Omega}\,
   		\det\!\bigl( A_{I\times L}\,|\,B_{I\times \Omega} \bigr)
   		f_K^\dagger f_L .
   		\label{eq:Phi-general-compact-even-sigma}
   	\end{equation}
   	where the sum runs over all ordered multi-indices $
   	K,L,\Xi,\Omega \subseteq \{1,\dots,m\}$ satisfying the constraints specified under the summation sign.
   \end{theorem}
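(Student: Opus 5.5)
The plan is to start from formula \eqref{eq:Phi-general-compact-parity} of Lemma~\ref{lem:Phi-general-compact-parity}, which already gives $\Phi^*(f_J^\dagger f_I)$ as a sum over $K,L,\Xi,\Omega$. I will split this sum into two parts according to the parity of $|\Xi|+|\Omega|$. In the even part, $P^{|\Xi|+|\Omega|}=P^{2k}=\mathbf 1$ because $P^2=\mathbf 1$, since $P=(-1)^{N_f}$ and $N_f$ has integer spectrum. Hence the even terms reduce exactly to the summands written in \eqref{eq:Phi-general-compact-even-sigma}. The whole proof therefore amounts to showing that every odd term vanishes, i.e.\ that $\Gamma_{\Xi;\Omega}=0$ whenever $|\Xi|+|\Omega|$ is odd.

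For this step I will use that $P$ anticommutes with each $f_j$ and each $f_j^\dagger$. This holds because $f_j$ lowers $N_f$ by one and $f_j^\dagger$ raises it by one, so $Pf_j=-f_jP$ and likewise for $f_j^\dagger$. Consequently $P f_\Xi^\dagger f_\Omega P = (-1)^{|\Xi|+|\Omega|} f_\Xi^\dagger f_\Omega$. Using $[P,\sigma]=0$, $P^2=\mathbf 1$ and cyclicity of the trace, I get
\begin{equation*}
\Gamma_{\Xi;\Omega}=\operatorname{Tr}(P\sigma P\, f_\Xi^\dagger f_\Omega)=\operatorname{Tr}(\sigma\, P f_\Xi^\dagger f_\Omega P)=(-1)^{|\Xi|+|\Omega|}\Gamma_{\Xi;\Omega}.
\end{equation*}
So $\Gamma_{\Xi;\Omega}=0$ for odd $|\Xi|+|\Omega|$. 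The odd part of the sum vanishes, and what remains is \eqref{eq:Phi-general-compact-even-sigma}.

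There is no real obstacle. The only points needing care are the anticommutation $\{P,f_j\}=0$, which is a standard spectral property of the number operator, and remembering that $\Gamma_{\Xi;\Omega}$ is computed in the first-factor copy of the $f$'s on $\mathbb{C}^{2^m}$ with the same $P$. Both checks are routine.
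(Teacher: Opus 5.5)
Your proposal is correct and follows the paper's argument: the paper also obtains the theorem as an immediate corollary of Lemma~\ref{lem:Phi-general-compact-parity}, noting that $[P,\sigma]=0$ forces $\Gamma_{\Xi;\Omega}=0$ for odd $|\Xi|+|\Omega|$ so that the $P$-terms drop out, and you make explicit the trace identity $\Gamma_{\Xi;\Omega}=(-1)^{|\Xi|+|\Omega|}\Gamma_{\Xi;\Omega}$ and the fact $P^{2}=\mathbf 1$ for the surviving terms. One small slip: the $f$'s in $\Gamma_{\Xi;\Omega}$ act on the environment factor, where $\sigma$ lives, not on the first factor, but since they are the same operators on $\mathbb{C}^{2^m}$ with the same $P$, this does not affect the argument.
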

   
   Although it is an immediate corollary of Lemma~\ref{lem:Phi-general-compact-parity}, we call it a theorem, because it has much greater practical importance than the general result given by Eq.~\eqref{eq:Phi-general-compact-parity}. Now for any integer $k$ the  space $	\operatorname{span} \{ f_J^\dagger f_I : |I| +|J| \leq k\}$  is invariant under $\Phi^*$. So any output moment of creation and annihilation operators is fully defined by same or lower order input moments only. This makes computation of the low order moments computationally efficient.

   Now let us consider several special cases of Eq.~\eqref{eq:Phi-general-compact-even-sigma}.
   First, assume that the environment is in the vacuum state $\sigma = |0\rangle\langle 0|$. Then
   \begin{equation*}
   	\Gamma_{\Xi;\Omega}
   	=
   	\langle 0| f_\Xi^\dagger f_\Omega |0\rangle
   	=
   	\delta_{\Xi,\varnothing}\,\delta_{\Omega,\varnothing},
   \end{equation*}
   and therefore only the term $\Xi=\Omega=\varnothing$ contributes in \eqref{eq:Phi-general-compact-even-sigma}. As a result,
   \begin{equation*}
   	\Phi^*\!\bigl(f_J^\dagger f_I\bigr)
   	=
   	\sum_{\substack{|K|=|J|\\ |L|=|I|}}
   	\det\!\bigl(A_{J\times K}\bigr)^*\,
   	\det\!\bigl(A_{I\times L}\bigr)\,
   	f_K^\dagger f_L .
   \end{equation*}
   So now the spaces spanned by  $f_J^\dagger f_I$ with fixed $|J|$,  $|I|$, and hence the output moments of fixed order are defined  by the moments of the same order only similarly to \cite{Teretenkov20,NosTer20,Barthel2022,Ivanov2022}.

   Second, assume that the environment state $\sigma$ is a gauge-invariant Gaussian state, whose nonzero second moments are encoded by the matrix
   \begin{equation*}
   	C_{\alpha\beta}
   	=
   	\operatorname{Tr}\!\bigl(\sigma\, f_\beta^\dagger f_\alpha\bigr).
   \end{equation*}
   Then all higher moments are given by the Wick theorem  as (see, e.g. Eq.~\cite[Eq. (17)]{Bravyi2005} for an equivalent formula in terms of  Pfaffian of an enlarged antisymmetric matrix)
   \begin{equation*}
   	\Gamma_{\Xi;\Omega}
   	=
   	\delta_{|\Xi|,|\Omega|}\,
   	\det\!\bigl(C_{\Omega\times \Xi}\bigr),
   \end{equation*}
   where $C_{\Omega\times \Xi}$ denotes the  submatrix of $C$ formed by taking rows indexed by $\Omega$ and columns indexed by $\Xi$. Substituting this into Eq.~\eqref{eq:Phi-general-compact-even-sigma}, we obtain
   \begin{equation*}
   	\Phi^*\!\bigl(f_J^\dagger f_I\bigr)
   	=
   	\sum_{\substack{
   			|J|=|K|+r\\
   			|I|=|L|+r
   	}}
   	(-1)^{r(|K|+|L|)}
   	\sum_{\substack{|\Xi|=|\Omega|=r}}
   	\det\!\bigl( A_{J\times K}\,|\,B_{J\times \Xi} \bigr)^*\,
   	\det\!\bigl(C_{\Omega\times \Xi}\bigr)\,
   	\det\!\bigl( A_{I\times L}\,|\,B_{I\times \Omega} \bigr)
   	f_K^\dagger f_L .
   \end{equation*}
   
   So even for the  gauge-invariant Gaussian channels  the formula for arbitrary moments of creation and annihilation operators is not much simpler than general formula \eqref{eq:Phi-general-compact-parity}. So we can consider our class of models to be a natural generalization of  gauge-invariant Gaussian channels and it is still tractable at the same level if one can compute $	\Gamma_{\Xi;\Omega}$, which can be done for the states other than the Gaussian ones. 
   
   In particular, let us start with the fermionic Fock state, which is a pure gauge-invariant Gaussian state,
   \begin{equation*}
   	\sigma = |M\rangle\langle M|,
   	\qquad
   	|M\rangle \equiv f_M^\dagger |0\rangle ,
   \end{equation*}
   where $M$ is an ordered subset of environment modes. Then
   \begin{equation*}
   	\Gamma_{\Xi;\Omega}
   	=
   	\langle M|\, f_\Xi^\dagger f_\Omega \,|M\rangle
   	=
   	\delta_{\Xi,\Omega}\,\mathbf{1}_{\Xi\subseteq M}.
   \end{equation*}
   Hence, Eq.~\eqref{eq:Phi-general-compact-even-sigma} reduces to
   \begin{equation*}
   	\Phi^*\!\bigl(f_J^\dagger f_I\bigr)
   	=
   	\sum_{\Xi\subseteq M}
   	\sum_{\substack{|J|= |K| +|\Xi|\\ |I|= |L| + |\Xi|}}
   	(-1)^{|\Xi|(|K|+|L|)}
   	\det\!\bigl( A_{J\times K}\,|\,B_{J\times \Xi} \bigr)^*\,
   	\det\!\bigl( A_{I\times L}\,|\,B_{I\times \Xi} \bigr)\,
   	f_K^\dagger f_L .
   \end{equation*}
   
   Similarly, all the moments of creation and annihilation operators can be calculated for non-Gaussian states that are diagonal in the Fock basis  $|M\rangle\langle M|$, e.g. uniform averages of the states with a fixed number of particles:
   \begin{equation*}
   	\sigma
   	=
   	\frac{1}{\binom{m}{N}}
   	\sum_{|M|=N} |M\rangle\langle M|.
   \end{equation*}
   Then
   \begin{equation*}
   	\Gamma_{\Xi;\Omega}
   	=
   	\delta_{\Xi,\Omega}
   	\frac{\binom{m-|\Xi|}{\,N-|\Xi|\,}}{\binom{m}{N}}
   \end{equation*}
   and Eq.~\eqref{eq:Phi-general-compact-even-sigma} takes the form
   \begin{equation*}
   	\Phi^*\!\bigl(f_J^\dagger f_I\bigr)
   	=
   	\sum_{\substack{
   			|J|=|K|+|\Xi|\\
   			|I|=|L|+|\Xi|
   	}}
   	(-1)^{|\Xi|(|K|+|L|)}
   	\frac{\binom{m-|\Xi|}{\,N-|\Xi|\,}}{\binom{m}{N}}
   	\det\!\bigl( A_{J\times K}\,|\,B_{J\times \Xi} \bigr)^*\,
   	\det\!\bigl( A_{I\times L}\,|\,B_{I\times \Xi} \bigr)
   	f_K^\dagger f_L.
   \end{equation*}
   
   Thus, Theorem~\ref{th:Phi-general-compact-even-sigma} provides a convenient way to construct wide classes of fermionic channels with good structural properties. Namely, their action on normally ordered monomials is explicit, and moments up to any fixed order evolve in a closed way, independently of higher-order moments. 
   
   \section{Second quantization candidates for  non-Hermitian systems}
   \label{sec:secondQuant}
   
   In this section we examine whether fermionic completely positive maps with linear action on annihilation operators can be interpreted as a form of second quantization of non-Hermitian one-particle dynamics. Here we assume that $\sigma$ is normalized, so $\Gamma_{\varnothing;\varnothing}=\operatorname{Tr}\sigma=1$. For $J=\varnothing$ and $I=\{j\}$, the general formula~\eqref{eq:Phi-general-compact-even-sigma} reduces to
   \begin{equation*}
   	\Phi^*(f_j)=\sum_l A_{jl} f_l .
   \end{equation*}
   Equivalently, in vector form,
   \begin{equation*}
   	\Phi^*(f)=Af .
   \end{equation*}

   A matrix $A$ is admissible (i.e., can appear in $\Phi^*(f)=Af$ for $\Phi$ defined by Eq.~\eqref{eq:defOfCPmap})
   if and only if it is a contraction:
   \begin{equation*}
   	A A^\dagger \le \mathbf 1,
   \end{equation*}
   because in this case one can choose
   \begin{equation}\label{eq:BDef}
   	B = \bigl(\mathbf 1 - A A^\dagger\bigr)^{1/2},
   \end{equation}
   to satisfy Eq.~\eqref{eq:isometry-condition-AB}.
   
   Thus, the channel $\Phi$ can be interpreted as second quantization of one-particle contraction $A$ (on the level of one-particle operators). This shows that, for open quantum systems, second quantization is highly non-unique. In particular, one can consider a one-parametric contractive semigroup, 
   \begin{equation}\label{eq:contractiveSemigroup}
   	A(t) = e^{i H_{\rm eff} t},
   \end{equation}
   which leads to Schroedinger equation for $f_t =  e^{i H_{\rm eff} t} f$ with non-Hermitian Hamiltonian $i(H_{\rm eff} -H_{\rm eff}^{\dagger})\leqslant 0$ (similarly to \cite{Teretenkov2020}). However, general moments are not described by a semigroup. In fact, without additional assumptions, even the multiparticle dynamics is not a semigroup:
   \begin{equation}\label{eq:badSecQuant}
   	\Phi^*(f_I)
   	=
   	\sum_{\substack{
   			|I|=|L|+|\Omega|\\
   			|\Omega|\equiv 0 \,(\mathrm{mod}\,2)
   	}}
   	\Gamma_{\varnothing;\Omega}\,
   	\det\!\bigl((e^{i H_{\rm eff} t})_{I\times L}\,|\,  \bigl(\mathbf 1 - e^{i H_{\rm eff} t} e^{-i H_{\rm eff}^{\dagger} t}\bigr)^{1/2} \left. \right._{I\times \Omega}\bigr)\,
   	f_L ,
   \end{equation}
   so one can say that it is not a genuine candidate for second quantization of non-interacting fermions with non-Hermitian one-particle Hamiltonian (with dissipative $i H_{\rm eff}$).
   
   But assume in addition that the environment state is gauge-invariant $[\sigma, N_f] = 0$. Then $	\Gamma_{\varnothing;\Omega}=0$ for all $
   \Omega\neq\varnothing$, and therefore
   \begin{equation}
   	\Phi^*(f_I)
   	=
   	\sum_{|L|=|I|}
   	\det\!\bigl(A_{I\times L}\bigr)\,f_L  \equiv \sum_{|L|=|I|}
   	(\wedge^{|I|} A)_{I,L} \,f_L ,
   	\label{eq:Phi-fI-gauge}
   \end{equation}
   where $\wedge^{p} A$ is an exterior power of $A$, or, equivalently,
   \begin{equation*}
   	\Phi^*(f^{\otimes p}) =  A^{\otimes p} f^{\otimes p},
   \end{equation*}
   where $(f^{\otimes p})_{i_1,\dots,i_p}\equiv f_{i_1}\cdots f_{i_p}$ is a tensor of all possible products of annihilation operators, not only ones with ordered indices. 
   If now we take  a semigroup defined by Eq.~\eqref{eq:contractiveSemigroup} for $A$, then for every $p=0,1,\dots,m$ we have
   \begin{equation*}
   	\Phi_t^*(f^{\otimes p}) = 	e^{\,i t\,  H_{\rm eff}^{(p)}} f^{\otimes p},
   \end{equation*}
   where
   \begin{equation}\label{eq:secQunat}
   	H_{\rm eff}^{(p)}
   	\equiv
   	\sum_{r=1}^p
   	\mathbf 1^{\otimes(r-1)} \otimes H_{\rm eff} \otimes \mathbf 1^{\otimes(p-r)},
   \end{equation}
   which coincides with formulae for second-quantized non-interacting fermions in the case of self-adjoint Hamiltonian $H_{\rm eff}$. Thus, in contrast to Eq.~\eqref{eq:badSecQuant}, this does define a genuine second quantization of non-interacting fermions with non-Hermitian one-particle Hamiltonian (with dissipative $i H_{\rm eff}$).
   
   Let us emphasize that for general ordered monomials $f_J^\dagger f_I$ formulae \eqref{eq:Phi-general-compact-even-sigma}, \eqref{eq:BDef}, \eqref{eq:contractiveSemigroup} may not define a semigroup similarly to Eq.~\eqref{eq:badSecQuant} even in the gauge-invariant case. However, we do not view this as a drawback of such quantizations, but as an important property of open quantum systems: the quantization can break semigroup properties for some observables which describe correlations. 
   
   \section{Dynamics averaged over a Poisson process}
   \label{sec:dynamics}
   
   In this section, we show that the constructed family of completely positive maps can be used to define a completely positive one-parameter trace-preserving semigroup such that the equations for the moments of creation  and annihilation operators  up to any fixed order are closed. Namely, starting from any completely positive map, one can \cite[Lemma 1]{Wolf2008} construct a GKSL generator by the formula
   \begin{equation}\label{eq:lindGenByPhi}
   	\mathcal L(\rho)
   	\equiv
   	\Phi(\rho)-\frac12\{\Phi^*(\mathbf 1),\rho\},
   \end{equation}
   because the Kraus representation for $\Phi$ defines a Lindblad representation for $\mathcal L$. But let us emphasize we need  not to know an explicit Lindblad representation to ensure that Eq.~\eqref{eq:lindGenByPhi} defines a GKSL generator. For the completely positive map defined by Eq.~\eqref{eq:Phi-general-compact-even-sigma} we have
   \begin{equation*}
   	\Phi^*(\mathbf 1)=\Gamma_{\varnothing;\varnothing}\,\mathbf 1,
   \end{equation*}
   so the GKSL generator \eqref{eq:lindGenByPhi} in the Heisenberg representation takes the form
   \begin{equation*}
   	\mathcal L^*(X)
   	=
   	\Phi^*(X)- \Gamma_{\varnothing;\varnothing} X.
   \end{equation*}
   Note that, such a GKSL generator naturally arises if one considers a discrete-time semigroup whose Heisenberg dynamics during each time step is described by a quantum channel $\Gamma_{\varnothing;\varnothing}^{-1}\Phi^*$ and then averages it with respect to a Poisson process with intensity $ \Gamma_{\varnothing;\varnothing}$.
   
   If a linear space of operators is invariant under $\Phi^*$, then it automatically does for $\mathcal L^*$. More precisely we obtain the following lemma.
   
   \begin{lemma}
   	Let $\Phi: \mathbb{C}^{2^m \times 2^m} \rightarrow \mathbb{C}^{2^m \times 2^m}$ be the completely positive map whose  action in Heisenberg picture on $f_J^\dagger f_I$ is given by Eq.~\eqref{eq:Phi-general-compact-even-sigma}. Define a GKSL generator by Eq.~\eqref{eq:lindGenByPhi}. Then
   	\begin{align*}
   		\mathcal L^*\!\bigl(f_J^\dagger f_I\bigr)
   		&=
   		\sum_{\substack{
   				|J|=|K|+|\Xi|\\
   				|I|=|L|+|\Omega|\\
   				|\Xi|+|\Omega| \equiv 0 \,(\mathrm{mod}\,2)
   		}}
   		(-1)^{|\Xi|(|K|+|L|)}
   		\det\!\bigl( A_{J\times K}\,|\,B_{J\times \Xi} \bigr)^*\,
   		\Gamma_{\Xi;\Omega}\,
   		\det\!\bigl( A_{I\times L}\,|\,B_{I\times \Omega} \bigr)
   		\,f_K^\dagger f_L
   		\\
   		&\qquad
   		-\Gamma_{\varnothing;\varnothing}\,f_J^\dagger f_I .
   	\end{align*}
   \end{lemma}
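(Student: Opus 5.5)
The plan is to compute the Heisenberg-picture dual of the generator \eqref{eq:lindGenByPhi} directly, and then substitute the formula for $\Phi^*$ from Theorem~\ref{th:Phi-general-compact-even-sigma}.

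First, I take the adjoint of \eqref{eq:lindGenByPhi} with respect to the Hilbert--Schmidt pairing $\operatorname{Tr}(X^\dagger \rho)$. The map $\Phi$ has adjoint $\Phi^*$. The anticommutator term $\rho\mapsto \frac12\{G,\rho\}$, with $G=\Phi^*(\mathbf 1)$ self-adjoint, has adjoint $X\mapsto \frac12\{G,X\}$. This gives
\[
\mathcal L^*(X)=\Phi^*(X)-\tfrac12\{\Phi^*(\mathbf 1),X\}.
\]

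Second, I compute $\Phi^*(\mathbf 1)$ by setting $I=J=\varnothing$ in \eqref{eq:Phi-general-compact-even-sigma}. The only admissible subsets are $K=L=\Xi=\Omega=\varnothing$. The empty determinants equal $1$ and the sign factor is trivial, so $\Phi^*(\mathbf 1)=\Gamma_{\varnothing;\varnothing}\mathbf 1$ with $\Gamma_{\varnothing;\varnothing}=\operatorname{Tr}\sigma$. The same conclusion follows directly from \eqref{eq:dualMap}, since $U^\dagger U=\mathbf 1$. As a scalar multiple of the identity, $\Phi^*(\mathbf 1)$ commutes with everything. Hence $\frac12\{\Phi^*(\mathbf 1),X\}=\Gamma_{\varnothing;\varnothing}X$ and $\mathcal L^*(X)=\Phi^*(X)-\Gamma_{\varnothing;\varnothing}X$.

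Finally, I put $X=f_J^\dagger f_I$ and insert \eqref{eq:Phi-general-compact-even-sigma} for $\Phi^*(f_J^\dagger f_I)$; this gives exactly the claimed expression.

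There is no real obstacle in this argument. The one point needing care is that the vanishing of the parity-odd terms relies on $[P,\sigma]=0$. That hypothesis is already built into the assumption that $\Phi^*$ acts by \eqref{eq:Phi-general-compact-even-sigma}, so it is available here.

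It is also worth recording that \eqref{eq:lindGenByPhi} is indeed of GKSL form, which the paper takes from \cite[Lemma 1]{Wolf2008}. If $\Phi(\rho)=\sum_k V_k\rho V_k^\dagger$ is a Kraus decomposition, then $\Phi^*(\mathbf 1)=\sum_k V_k^\dagger V_k$, and $\mathcal L$ has Lindblad operators $V_k$ and zero Hamiltonian.
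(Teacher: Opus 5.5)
Your proposal is correct and follows the paper's own argument: compute $\Phi^*(\mathbf 1)=\Gamma_{\varnothing;\varnothing}\mathbf 1$, conclude $\mathcal L^*(X)=\Phi^*(X)-\Gamma_{\varnothing;\varnothing}X$, and substitute Eq.~\eqref{eq:Phi-general-compact-even-sigma} for $X=f_J^\dagger f_I$. Your extra remarks on the adjoint of the anticommutator term and on the Kraus form of $\mathcal L$ are also correct.
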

   
   As an immediate corollary we obtain the following theorem. 
   
   \begin{theorem}\label{th:GKSLmoments}
   	Let $(\rho_t)_{t\ge 0}$ be a solution of the master equation
   	\begin{equation}\label{eq:GKSLeq}
   		\frac{d}{dt}\rho_t=\mathcal L(\rho_t),
   	\end{equation}
   	where the generator $\mathcal L$ is defined by Eq.~\eqref{eq:lindGenByPhi}. For arbitrary subsets $I,J\subset\{1,\dots,m\}$, define the normally ordered moments
   	\begin{equation*}
   		\langle  f_J^\dagger f_I \rangle_t \equiv \operatorname{Tr}\!\bigl(\rho_t\, f_J^\dagger f_I\bigr).
   	\end{equation*}
   	Then the family $ \langle  f_J^\dagger f_I \rangle_t $ satisfies the  linear system of differential equations
   	\begin{align}
   		\frac{d}{dt} \langle  f_J^\dagger f_I \rangle_t
   		&=
   		\sum_{\substack{
   				|J|=|K|+|\Xi|\\
   				|I|=|L|+|\Omega|\\
   				|\Xi|+|\Omega| \equiv 0 \,(\mathrm{mod}\,2)
   		}}
   		(-1)^{|\Xi|(|K|+|L|)}
   		\det\!\bigl( A_{J\times K}\,|\,B_{J\times \Xi} \bigr)^*\,
   		\Gamma_{\Xi;\Omega}\,
   		\det\!\bigl( A_{I\times L}\,|\,B_{I\times \Omega} \bigr)
   		\,  \langle  f_K^\dagger f_L \rangle_t
   		\nonumber\\
   		&\qquad
   		-\Gamma_{\varnothing;\varnothing}\, \langle  f_J^\dagger f_I \rangle_t \label{eq:closedEqForMoments}
   	\end{align}
   	and, for each fixed order, the corresponding subsystem of equations is closed.
   \end{theorem}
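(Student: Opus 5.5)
The plan is to differentiate the moment along the trajectory, transfer the generator to the observable, and then read off closure from the structure of the preceding Lemma. Since $\rho_t$ solves Eq.~\eqref{eq:GKSLeq} on the finite-dimensional space $\mathbb{C}^{2^m\times 2^m}$, the map $t\mapsto \rho_t = e^{t\mathcal L}\rho_0$ is differentiable. Hence
\[
\frac{d}{dt}\langle f_J^\dagger f_I\rangle_t = \operatorname{Tr}\bigl(\mathcal L(\rho_t)\, f_J^\dagger f_I\bigr) = \operatorname{Tr}\bigl(\rho_t\, \mathcal L^*(f_J^\dagger f_I)\bigr).
\]
Here $\mathcal L^*$ is the Hilbert--Schmidt dual of $\mathcal L$. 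By Eq.~\eqref{eq:lindGenByPhi} and $\Phi^*(\mathbf 1)=\Gamma_{\varnothing;\varnothing}\mathbf 1$, it equals $\Phi^*(X)-\Gamma_{\varnothing;\varnothing}X$, which is exactly the form used in the preceding Lemma.

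Next I will substitute the explicit expression for $\mathcal L^*(f_J^\dagger f_I)$ from that Lemma, which rests on Theorem~\ref{th:Phi-general-compact-even-sigma}. Because the trace is linear and all coefficients (the minors and $\Gamma_{\Xi;\Omega}$) are scalars, $\operatorname{Tr}(\rho_t\,\cdot)$ passes through the finite sum. Each $f_K^\dagger f_L$ becomes $\langle f_K^\dagger f_L\rangle_t$, and the last term becomes $-\Gamma_{\varnothing;\varnothing}\langle f_J^\dagger f_I\rangle_t$. This gives Eq.~\eqref{eq:closedEqForMoments}.

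For closedness at a fixed order, I will use the constraints under the summation sign. Every term has $|K|=|J|-|\Xi|\le |J|$ and $|L|=|I|-|\Omega|\le |I|$, so $|K|+|L|\le |I|+|J|$. Therefore, for any $k$, the right-hand side of the equation for a moment with $|I|+|J|\le k$ involves only moments with $|K|+|L|\le k$. This means the finite family $\{\langle f_J^\dagger f_I\rangle_t : |I|+|J|\le k\}$ satisfies a closed linear ODE system, equivalent to the invariance of $\operatorname{span}\{f_J^\dagger f_I: |I|+|J|\le k\}$ under $\mathcal L^*$.

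There is no real obstacle. The only point needing care is that the monomials $f_J^\dagger f_I$ may be linearly dependent, or the indexing may produce repeated terms. This does not matter, because the equations are derived directly for each moment and not by inverting a basis expansion.
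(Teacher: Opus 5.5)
Your proposal is correct and follows the paper's route. The paper obtains this theorem as an immediate corollary of the preceding Lemma: the explicit form of $\mathcal L^*(f_J^\dagger f_I)=\Phi^*(f_J^\dagger f_I)-\Gamma_{\varnothing;\varnothing}f_J^\dagger f_I$ is paired with $\rho_t$ (for even $\sigma$, which you implicitly assume through Theorem~\ref{th:Phi-general-compact-even-sigma}), and closure comes from the invariance of $\operatorname{span}\{f_J^\dagger f_I:|I|+|J|\le k\}$, i.e.\ your bounds $|K|\le|J|$, $|L|\le|I|$. Your final caveat is unnecessary, since the ordered monomials $f_J^\dagger f_I$ are linearly independent and form a basis of $\mathbb{C}^{2^m\times 2^m}$.
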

   Thus, to compute moments up to a fixed order, one needs to evaluate an exponential of a matrix, whose dimension grows polynomially in $m$, instead of solving Eq.~\eqref{eq:GKSLeq}, which is also finite dimensional linear equation and its solution reduces to an exponential of a matrix, but with  size which is exponential in  $m$. Analogously to the analysis in \cite{NosTer20} one can also calculate Markovian (defined by regression formulae) multi-time correlation functions of fixed order $p$ for creation and annihilation operators in terms of $p$ exponentials of matrices with polynomial sizes in  $m$.  
   
   For a gauge-invariant matrix $\sigma$ similarly to Eq.~\eqref{eq:Phi-fI-gauge} we have
   \begin{equation*}
   	\mathcal L^*(f_I)
   	=
   	\Gamma_{\varnothing;\varnothing} \left(
   	\sum_{|L|=|I|}
   	\det\!\bigl(A_{I\times L}\bigr)\,f_L
   	-
   	\,f_I.\right)
   \end{equation*}
   and, hence, Eq.~\eqref{eq:closedEqForMoments} for $f_I$ takes the form
   \begin{equation}\label{eq:dynCreat}
   	\langle  f^{\otimes p} \rangle_t =
   	e^{\,t\,\Gamma_{\varnothing;\varnothing}\,( A^{\otimes p}-\mathbf 1)}
   	\langle  f^{\otimes p} \rangle_0.
   \end{equation}
   But for general  gauge-invariant state $\sigma$ the moments involving both creation and annihilation operators satisfy the general form with contributions involving the matrix $B$:
   \begin{equation*}
   	\frac{d}{dt} \langle  f_J^\dagger f_I \rangle_t
   	=
   	\sum_{\substack{
   			|J|=|K|+|\Xi|\\
   			|I|=|L|+|\Omega|\\
   			|\Xi|=|\Omega|
   	}}
   	(-1)^{|\Xi|(|K|+|L|)}
   	\det\!\bigl( A_{J\times K}\,|\,B_{J\times \Xi} \bigr)^*\,
   	\Gamma_{\Xi;\Omega}\,
   	\det\!\bigl( A_{I\times L}\,|\,B_{I\times \Omega} \bigr)
   	\, \langle f_K^\dagger f_L \rangle_t
   	- \Gamma_{\varnothing;\varnothing}\, \langle  f_J^\dagger f_I \rangle_t .
   \end{equation*}
   Nevertheless for $\sigma = 	\Gamma_{\varnothing;\varnothing} |0\rangle \langle 0|$ it reduces to
   \begin{equation*}
   	\frac{d}{dt}\langle f_J^\dagger f_I\rangle_t
   	=
   	\Gamma_{\varnothing;\varnothing}\left(	\sum_{\substack{|K|=|J|\\|L|=|I|}}
   	(\wedge^{|J|} A)_{J,K}^*\,
   	(\wedge^{|I|} A)_{I,L}\,
   	\langle f_K^\dagger f_L\rangle_t
   	-
   	\langle f_J^\dagger f_I\rangle_t\right),
   \end{equation*}
   so we have
   \begin{equation*}
   	\,\bigl\langle (f^\dagger)^{\otimes p}\otimes f^{\otimes q}\bigr\rangle_t
   	=
   	\exp \left(	\Gamma_{\varnothing;\varnothing} \Bigl[(A^{\dagger})^{\otimes p}\otimes A^{\otimes q}-\mathbf 1\Bigr]t\right)
   	\bigl\langle (f^\dagger)^{\otimes p}\otimes f^{\otimes q}\bigr\rangle_0.
   \end{equation*}
   
   Remark that we can consider sums of GKSL generators with closed dynamics up to fixed order to obtain new generators with such a property. For example, one can sum several GKSL generators from Theorem~\ref{th:GKSLmoments}. One can also simply add to such a GKSL generator a generator of unitary second-quantized dynamics  with one-particle Hamiltonian $H \in \mathbb{C}^{m \times m}$. In the latter case for a gauge-invariant matrix $\sigma$ Eq.~\eqref{eq:dynCreat} is replaced by
   \begin{equation*}
   	\langle  f^{\otimes p} \rangle_t =
   	e^{\,t\,\left(iH^{(p)} +\Gamma_{\varnothing;\varnothing}\,[A^{\otimes p}-\mathbf 1] \right)}
   	\langle  f^{\otimes p} \rangle_0,
   \end{equation*}
   where $H^{(p)}$ is defined similarly to Eq.~\eqref{eq:secQunat}. Note that as $A$ is a contraction, then the term $\Gamma_{\varnothing;\varnothing}\,[A^{\otimes p}-\mathbf 1]$ is dissipative. But while  $H^{(p)}$ is a sum of local terms, this dissipative contribution is $p$-particle interaction. Such a $p$-particle interaction is not usual for closed quantum many-body systems, but can arise naturally from averaging of local fluctuating interactions, if they are described by the Poisson process.
   
   \section{Post-selected linear fermionic maps}
   \label{sec:posSec}
   
   In different applications not only the completely positive and trace preserving maps are important, but general completely positive maps as well. For example such maps can arise from post-selection of the unitary dynamics of an open system  on some  environment measurement result described by an effect $E$, i.e. selfadjoint matrix such that $0 \leqslant E  \leqslant \mathbf 1$. The corresponding density-matrix dynamics is described by the map
   \begin{equation*}
   	\Phi_E(\rho)
   	\equiv
   	\operatorname{Tr}_2\!\bigl[
   	(\mathbf 1\otimes E)\,
   	U(\rho\otimes \sigma)U^\dagger
   	\bigr].
   \end{equation*}
   whose complete positivity follows directly from the representation
   \begin{equation*}
   	\Phi_E(\rho)
   	=
   	\operatorname{Tr}_2\!\bigl[
   	(\mathbf 1\otimes E^{1/2})\,U(\rho\otimes \sigma)U^\dagger(\mathbf 1\otimes E^{1/2})
   	\bigr].
   \end{equation*}
   $	\Phi_E$ is a direct generalization of $	\Phi$ defined by Eq.~\eqref{eq:defOfCPmap}.
   
   \begin{lemma}
   	\label{lem:PhiE-general}
   	Assume that a unitary $U: \mathbb{C}^{2^m} \otimes \mathbb{C}^{2^m} \rightarrow \mathbb{C}^{2^m} \otimes \mathbb{C}^{2^m}$  preserves the total particle number and acts linearly as
   	\begin{equation}\label{eq:unitaryRep}
   		U^\dagger
   		\begin{pmatrix}
   			a\\ b
   		\end{pmatrix}
   		U
   		=
   		W
   		\begin{pmatrix}
   			a\\ b
   		\end{pmatrix},
   		\qquad
   		W=
   		\begin{pmatrix}
   			A & B\\
   			C & D
   		\end{pmatrix},
   	\end{equation}
   	where $W$ is a unitary $2m\times 2m$ matrix. Let $\sigma: \mathbb{C}^{2^m}  \rightarrow \mathbb{C}^{2^m}$, $\sigma = \sigma^{\dagger}$, $\sigma  \ge 0$ act on the second tensor factor. Let $E$ be an even selfadjoint non-negative operator with expansion
   	\begin{equation*}
   		E
   		=
   		\sum_{ |M|+|N|\equiv 0\!\!\!\pmod 2}
   		e_{M;N}\, f_M^\dagger f_N ,
   		\qquad
   		e_{M;N}^*=e_{N;M}.
   	\end{equation*}
   	
   	Define a completely positive map
   	\begin{equation}\label{eq:defPhiE}
   		\Phi_E(\rho)
   		\equiv
   		\operatorname{Tr}_2\!\bigl[
   		(\mathbf 1\otimes E)\,
   		U(\rho\otimes \sigma)U^\dagger
   		\bigr].
   	\end{equation}
   	
   	Then for any ordered multi-indices $I,J$,
   	\begin{equation}\label{eq:actionOfPhiE}
   		\Phi_E^*\!\bigl(f_J^\dagger f_I\bigr)
   		=
   		\sum_{\substack{
   				|M|+|N|\equiv 0\!\!\!\pmod 2\\
   				|J|+|M|=|K|+|\Xi|\\
   				|I|+|N|=|L|+|\Omega|
   		}}
   		(-1)^{(|I|+|J|)|M|+|\Xi|(|K|+|L|)}\,
   		e_{M;N}\,
   		\bigl(\Delta^{W}_{J,M;\,K,\Xi}\bigr)^*\,
   		\Gamma_{\Xi;\Omega}\,
   		\Delta^{W}_{I,N;\,L,\Omega}
   		\,
   		f_K^\dagger f_L\, P^{|\Xi|+|\Omega|}.
   	\end{equation}
   	where the sum runs over all ordered multi-indices $M,N,K,L,\Xi,\Omega \subseteq \{1,\dots,m\} $ satisfying the constraints specified under the summation sign, $	\Gamma_{\Xi;\Omega}
   	\equiv
   	\operatorname{Tr}\!\bigl(
   	\sigma\, f_\Xi^\dagger f_\Omega
   	\bigr)$ and
   	\begin{equation}
   		\Delta^{W}_{I,N;\,L,\Omega}
   		\equiv
   		\det
   		\begin{pmatrix}
   			A_{I\times L} & B_{I\times \Omega}\\[2mm]
   			C_{N\times L} & D_{N\times \Omega}
   		\end{pmatrix},
   	\end{equation}
   	for multi-indices $I,\Omega,N,L$ satisfying $|I|+|N|=|L|+|\Omega|$.
   \end{lemma}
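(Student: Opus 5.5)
The plan follows the proof of Lemma~\ref{lem:Phi-general-compact-parity}, but first moves the effect $E$ inside the dual map and rewrites it as an operator built from the $b$-modes. By definition of the dual,
\[
\Phi_E^*(X)=\operatorname{Tr}_2\!\bigl[(\mathbf 1\otimes\sigma)\,U^\dagger\bigl(X\otimes E\bigr)U\bigr].
\]
The key observation is that $\mathbf 1\otimes E$ can be written in terms of the $b$'s. Since $b_M=P^{|M|}\otimes f_M$, we get $b_M^\dagger b_N=P^{|M|+|N|}\otimes f_M^\dagger f_N$. Because $E$ is even, only $|M|+|N|$ even appears, so $\mathbf 1\otimes E=\sum e_{M;N}\, b_M^\dagger b_N$. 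Hence
\[
X\otimes E=\sum_{M,N} e_{M;N}\, a_J^\dagger a_I\, b_M^\dagger b_N
\qquad\text{for } X=f_J^\dagger f_I .
\]

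Next I would reorder this product into the form $(a_J b_M)^\dagger (a_I b_N)$. We have $(a_Jb_M)^\dagger=b_M^\dagger a_J^\dagger$. Moving $b_M^\dagger$ to the left across $a_J^\dagger a_I$ costs a sign $(-1)^{|M|(|I|+|J|)}$. This is exactly the first sign factor in \eqref{eq:actionOfPhiE}, so
\[
a_J^\dagger a_I\, b_M^\dagger b_N=(-1)^{|M|(|I|+|J|)}\,(a_J b_M)^\dagger (a_I b_N).
\]
Conjugating by $U$ then gives $(U^\dagger a_J b_M U)^\dagger (U^\dagger a_I b_N U)$. By \eqref{eq:unitaryRep}, $U^\dagger a_I b_N U$ is the ordered product of the rows $I$ of $Aa+Bb$ followed by the rows $N$ of $Ca+Db$.

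The main step is to generalize Lemma~\ref{lem:femionicMultinonmial} to a product of linear combinations indexed by the row set $(I,N)$ of $W$, that is, rows $I$ from the upper block and rows $N$ from the lower block:
\[
\prod_{s\in I}(Aa+Bb)_s\prod_{t\in N}(Ca+Db)_t=\sum_{|L|+|\Omega|=|I|+|N|}\Delta^W_{I,N;L,\Omega}\,a_Lb_\Omega .
\]
The proof is verbatim the Leibniz argument of Lemma~\ref{lem:femionicMultinonmial}, applied to the $(|I|+|N|)\times(|I|+|N|)$ submatrix of $W$. It uses only the fact that all $a_k,b_\alpha$ mutually anticommute.

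Substituting both expansions gives a sum of terms $b_\Xi^\dagger a_K^\dagger a_L b_\Omega$. I would then reorder these exactly as in Lemma~\ref{lem:Phi-general-compact-parity}, which produces the sign $(-1)^{|\Xi|(|K|+|L|)}$ and the factor $f_K^\dagger f_L P^{|\Xi|+|\Omega|}\otimes f_\Xi^\dagger f_\Omega$. Taking the partial trace against $\sigma$ yields $\Gamma_{\Xi;\Omega}$, and collecting terms gives \eqref{eq:actionOfPhiE}.

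The particle-number hypothesis on $U$ is not needed beyond the linear action \eqref{eq:unitaryRep}. The main obstacle is bookkeeping of signs: the sign from moving $b_M^\dagger$, the ordering convention $a_Ib_N$ versus the row order of the block matrix (which must match the row order defining $\Delta^W$), and the reverse order in $(a_Jb_M)^\dagger$. I would check each of these on small cases, such as $|I|=1$ and $|M|=1$.
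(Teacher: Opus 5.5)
Your proposal is correct and follows essentially the same route as the paper. Both arguments write $\mathbf 1\otimes E=\sum e_{M;N}b_M^\dagger b_N$, reorder to $(a_Jb_M)^\dagger(a_Ib_N)$, expand $U^\dagger a_Ib_NU$ by the block-determinant version of Lemma~\ref{lem:femionicMultinonmial}, and then reorder $b_\Xi^\dagger a_K^\dagger a_Lb_\Omega$ and take the partial trace as in Lemma~\ref{lem:Phi-general-compact-parity}. The only difference is cosmetic: you collect the sign $(-1)^{(|I|+|J|)|M|}$ in one move, whereas the paper gets it as $(-1)^{|I||M|}$ from commuting $a_I$ past $b_M^\dagger$ times $(-1)^{|J||M|}$ from $a_J^\dagger b_M^\dagger=(-1)^{|J||M|}(a_Jb_M)^\dagger$.
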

   
   \begin{proof}
   	
   	The proof  is similar to the one of Lemma \ref{lem:Phi-general-compact-parity}, so we focus only on the main steps. The dual map to $\Phi_E$ defined by Eq.~\eqref{eq:defPhiE} is
   	\begin{equation*}
   		\Phi_E^*(X)
   		=
   		\operatorname{Tr}_2\!\bigl[(\mathbf 1\otimes \sigma)\,U^\dagger (X\otimes E) U\bigr]
   	\end{equation*}
   	and we apply it to $	X=f_J^\dagger f_I$, taking into account $	f_J^\dagger f_I\otimes \mathbf 1 = a_J^\dagger a_I$. Because $E$ is even, $	\mathbf 1\otimes E$ can be rewritten in terms of normally ordered monomials in $b$ without parity operators as
   	\begin{equation*}
   		\mathbf 1\otimes E
   		=
   		\sum_{|M|+|N|\equiv 0\!\!\!\pmod 2}
   		e_{M;N}\, b_M^\dagger b_N .
   	\end{equation*}
   	Then
   	\begin{equation*}
   		(f_J^\dagger f_I)\otimes E
   		=
   		(a_J^\dagger a_I)(\mathbf 1\otimes E)
   		=
   		\sum_{ |M|+|N|\equiv 0\!\!\!\!\pmod 2}
   		e_{M;N}\, a_J^\dagger a_I\, b_M^\dagger b_N.
   	\end{equation*}
   	and so using CAR, we have $a_I\, b_M^\dagger = (-1)^{|I||M|} b_M^\dagger a_I$, and,  hence, we obtain
   	\begin{equation}
   		\Phi_E^*(f_J^\dagger f_I)
   		=
   		\sum_{\substack{|M|+|N|\equiv 0\!\!\!\pmod 2}}
   		(-1)^{|I||M|}\, e_{M;N}\,
   		\operatorname{Tr}_2\!\Bigl[
   		(\mathbf 1\otimes \sigma)\,
   		U^\dagger (a_J^\dagger b_M^\dagger a_I b_N)U
   		\Bigr].
   		\label{eq:dual-start}
   	\end{equation}
   	
   	Using Eq.~\eqref{eq:unitaryRep} similar to Lemma \ref{lem:femionicMultinonmial} we have
   	\begin{equation*}
   		U^\dagger(a_I b_N)U
   		=
   		\sum_{\substack{|L|+|\Omega|=|I|+|N|}}
   		\Delta^W_{I,N;\,L,\Omega}\,
   		a_L b_\Omega.
   	\end{equation*}
   	Taking the adjoint of the corresponding formula for $a_J b_M$, and using $a_J^\dagger b_M^\dagger
   	=
   	(-1)^{|J||M|}(a_J b_M)^\dagger$,
   	we obtain
   	\begin{equation*}
   		U^\dagger(a_J^\dagger b_M^\dagger)U
   		=
   		\sum_{\substack{|K|+|\Xi|=|J|+|M|}}
   		(-1)^{|J||M|}
   		\bigl(\Delta^W_{J,M;\,K,\Xi}\bigr)^*\,
   		b_\Xi^\dagger a_K^\dagger.
   	\end{equation*}
   	Hence,
   	\begin{align}
   		U^\dagger(a_J^\dagger b_M^\dagger a_I b_N)U
   		&=
   		\sum_{\substack{
   				|J|+|M|=|K|+|\Xi|\\
   				|I|+|N|=|L|+|\Omega|
   		}}
   		(-1)^{|J||M|}
   		\bigl(\Delta^W_{J,M;\,K,\Xi}\bigr)^*\,
   		\Delta^W_{I,N;\,L,\Omega}\,
   		b_\Xi^\dagger a_K^\dagger a_L b_\Omega .
   		\label{eq:middle}
   	\end{align}
   	Substituting Eq.~\eqref{eq:middle} into Eq.~\eqref{eq:dual-start}, we obtain
   	\begin{align}
   		\Phi_E^*(f_J^\dagger f_I)
   		&=
   		\sum_{\substack{|M|+|N|\equiv 0\!\!\!\pmod 2}}
   		e_{M;N}
   		\sum_{\substack{
   				|J|+|M|=|K|+|\Xi|\\
   				|I|+|N|=|L|+|\Omega|
   		}}
   		(-1)^{|I||M|+|J||M|}
   		\bigl(\Delta^W_{J,M;\,K,\Xi}\bigr)^*\,
   		\Delta^W_{I,N;\,L,\Omega}
   		\nonumber\\
   		&\qquad\qquad\times
   		\operatorname{Tr}_2\!\bigl[
   		(\mathbf 1\otimes \sigma)\,
   		b_\Xi^\dagger a_K^\dagger a_L b_\Omega
   		\bigr]. \label{eq:actionOfPhiEInTremsOfab}
   	\end{align}
   	
   	Finally, using $b_\Xi^\dagger a_K^\dagger a_L b_\Omega = (-1)^{|\Xi|(|K|+|L|)} \, a_K^\dagger a_L \, b_\Xi^\dagger b_\Omega
   	=
   	(-1)^{|\Xi|(|K|+|L|)}
   	f_K^\dagger f_L\, P^{|\Xi|+|\Omega|}
   	\otimes
   	f_\Xi^\dagger f_\Omega$,
   	we have
   	\begin{equation*}
   		\operatorname{Tr}_2\!\bigl[
   		(\mathbf 1\otimes \sigma)\,
   		b_\Xi^\dagger a_K^\dagger a_L b_\Omega
   		\bigr]
   		=
   		(-1)^{|\Xi|(|K|+|L|)}
   		f_K^\dagger f_L\, P^{|\Xi|+|\Omega|}
   		\Gamma_{\Xi;\Omega},
   	\end{equation*}
   	where $	\Gamma_{\Xi;\Omega}
   	\equiv
   	\operatorname{Tr}\!\bigl(\sigma f_\Xi^\dagger f_\Omega\bigr)$.	Thus, Eq.~\eqref{eq:actionOfPhiEInTremsOfab} takes the form of Eq.~\eqref{eq:actionOfPhiE}.
   \end{proof}
   
   For even states analogously  to Theorem \ref{th:Phi-general-compact-even-sigma} we obtain the following theorem.
   
   \begin{theorem}
   	In the setting of Lemma~\ref{lem:PhiE-general}, assume in addition that $[P,\sigma]=0$, then
   	
   	\begin{equation*}
   		\Phi_E^*\!\bigl(f_J^\dagger f_I\bigr)
   		=
   		\sum_{\substack{
   				|M|+|N|\equiv 0\!\!\!\pmod 2\\
   				|J|+|M|=|K|+|\Xi|\\
   				|I|+|N|=|L|+|\Omega|\\
   				|\Xi|+|\Omega|\equiv 0\!\!\!\pmod 2
   		}}
   		(-1)^{(|I|+|J|)|M|+|\Xi|(|K|+|L|)}
   		\,e_{M;N}\,
   		\bigl(\Delta^W_{J,M;\,K,\Xi}\bigr)^*\,
   		\Gamma_{\Xi;\Omega}\,
   		\Delta^W_{I,N;\,L,\Omega}\,
   		f_K^\dagger f_L ,
   	\end{equation*}
   	where the  sum runs over all ordered multi-indices $M,N,K,L,\Xi,\Omega \subseteq \{1,\dots,m\} $.
   \end{theorem}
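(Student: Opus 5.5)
We derive this directly from Lemma~\ref{lem:PhiE-general}, the same way Theorem~\ref{th:Phi-general-compact-even-sigma} follows from Lemma~\ref{lem:Phi-general-compact-parity}. Start from Eq.~\eqref{eq:actionOfPhiE}. Every summand is the monomial $f_K^\dagger f_L$ multiplied by $P^{|\Xi|+|\Omega|}$, and the scalar coefficient contains $\Gamma_{\Xi;\Omega}=\operatorname{Tr}(\sigma f_\Xi^\dagger f_\Omega)$. It therefore suffices to prove two facts. First, under $[P,\sigma]=0$ every term with $|\Xi|+|\Omega|$ odd has $\Gamma_{\Xi;\Omega}=0$. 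Second, for the remaining terms $P^{|\Xi|+|\Omega|}=\mathbf 1$, because $P^2=\mathbf 1$.

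For the vanishing claim, observe that $P f_j P = -f_j$ and $P f_j^\dagger P=-f_j^\dagger$, since $f_j$ changes $N_f$ by one. Hence $P f_\Xi^\dagger f_\Omega P=(-1)^{|\Xi|+|\Omega|} f_\Xi^\dagger f_\Omega$. Now use $P^2=\mathbf 1$, $P\sigma P=\sigma$ (which follows from $[P,\sigma]=0$), and cyclicity of the trace:
\[
\Gamma_{\Xi;\Omega}=\operatorname{Tr}(P\sigma P\, f_\Xi^\dagger f_\Omega)=\operatorname{Tr}(\sigma\, P f_\Xi^\dagger f_\Omega P)=(-1)^{|\Xi|+|\Omega|}\Gamma_{\Xi;\Omega}.
\]
So $\Gamma_{\Xi;\Omega}=0$ whenever $|\Xi|+|\Omega|$ is odd.

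Dropping these terms adds the constraint $|\Xi|+|\Omega|\equiv 0 \pmod 2$ under the summation sign. In the surviving terms $P^{|\Xi|+|\Omega|}=\mathbf 1$, so the factor disappears. The sign $(-1)^{(|I|+|J|)|M|+|\Xi|(|K|+|L|)}$, the coefficients $e_{M;N}$ and the generalized minors $\Delta^W$ are left untouched. The result is exactly the stated formula.

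There is no real obstacle. The only point needing care is the parity conjugation $P f_j P=-f_j$. It follows from $f_j N_f=(N_f+1)f_j$, which gives $f_j(-1)^{N_f}=-(-1)^{N_f}f_j$.
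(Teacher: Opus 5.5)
Your proposal is correct and follows the paper's route: the paper obtains this theorem from Lemma~\ref{lem:PhiE-general} exactly as Theorem~\ref{th:Phi-general-compact-even-sigma} follows from Lemma~\ref{lem:Phi-general-compact-parity}. In both cases, $[P,\sigma]=0$ forces $\Gamma_{\Xi;\Omega}=0$ for odd $|\Xi|+|\Omega|$, so only the terms with $P^{|\Xi|+|\Omega|}=\mathbf 1$ survive. The paper only asserts this vanishing, and you additionally prove it via $Pf_jP=-f_j$ and cyclicity of the trace.
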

   
   The above construction naturally extends to the framework of quantum instruments. In particular, if the operator $E$ is interpreted as an effect corresponding to a given measurement outcome and $\operatorname{Tr}\sigma=1$, then the map $\Phi_E$ describes a completely positive trace-non-increasing transformation associated with post-selection on this outcome. More generally, considering a positive operator-valued measure $E(B)$ on the environment, one obtains a family of completely positive maps $\Phi_{E(B)}$ forming a quantum instrument. In this setting, the theorem provides an explicit description of the Heisenberg-picture action of each operation $\Phi_{E(B)}^*$ on fermionic monomials. Therefore, it can be directly applied to analyze conditional dynamics of fermionic systems under measurements on the environment, including selective evolutions corresponding to individual outcomes as well as their combinations obtained by summation over subsets of outcomes.
   
   \section{Conclusions}
   
   We constructed a broad class of completely positive maps for fermionic systems induced by linear transformations of system and environment CAR modes. For these maps, we derived explicit Heisenberg-picture formulas for arbitrary normally ordered monomials in terms of minors of the mode-transformation matrices and correlation tensors of the environment state. This gives a direct algebraic description of the action of the map on many-body observables.
   
   Our main result is that for even environment states the linear span of normally ordered monomials up to any fixed order is invariant under the corresponding dual maps. Therefore, low-order moments evolve independently of higher-order ones, which makes their dynamics efficiently computable. In this way, the developed framework extends the closure property familiar from Gaussian fermionic dynamics to a substantially wider class of non-Gaussian completely positive maps.
   
   We also showed that the same construction produces GKSL generators for which moments up to any fixed order satisfy closed linear systems of equations. This provides an efficient way to study Markovian fermionic dynamics without solving the full master equation on the exponentially large many-body space. Such a phenomenon of having series of invariant spaces of polynomial size in number of modes is called operator space fragmentation and is now actively studied \cite{Essler2020, Li2023, Teretenkov2024b}. In addition, we discussed the relation of the construction to second quantization of non-Hermitian one-particle contractions and showed that in the gauge-invariant case the action on products of annihilation operators is governed by exterior powers of the underlying one-particle map. Finally, we extended the formalism to completely positive maps arising from post-selection on environment measurement outcomes, thus including conditional and instrument-type transformations into the same framework.
   
   Possible directions for future work include extensions to bosonic systems and to general Bogoliubov transformations mixing creation and annihilation operators, in particular, in the context of their recent open quantum systems applications~\cite{Trubilko2026}. It would also be interesting to identify subclasses of the present construction that possess additional composition or semigroup properties, as well as to explore applications to concrete many-body open-system models. We also remark that some GKSL equations can be mapped, via an appropriate vectorization and Wick rotation of the coefficients, to bilayer Schroedinger equations \cite{Essler2020, Li2023, Teretenkov2024b}. An important question for further study is therefore which hierarchies of pure-state dynamics arise in such closed quantum systems. Another interesting direction is to clarify the connection between hierarchies of moments of creation and annihilation operators and matrix product operator approaches that have been actively developed in recent years \cite{Yashin2026}.

   \section{Acknowledgments}
   
   The author is sincerely grateful to Y.~A.~Biriukov, O.~V.~Lychkovskiy, M.~S.~Shustin and V.~I.~Yashin for fruitful discussions of the problems considered in the paper.
   
   \section{Funding}
   
   This work was supported by the Russian Science Foundation under grant no. 25-21-00700, https://rscf.ru/en/project/25-21-00700/.
   
   %
   % The Bibliography
   %

\end{document}